\newtheorem{theorem}{Theorem}
\newtheorem{lemma}{Lemma}
\newtheorem{corr}{Corollary}
\newtheorem{fact}{Fact}
\newtheorem{definition}{Definition}
\newcommand{\ver}[1]{v(\{#1\})}
\newcommand{\nalph}{n^\alpha}
\newcommand{\omnalph}{n^{1-\alpha}}
\begin{document}
\title{Age of Gossip in Random and Bipartite Networks} 



\if{false}
\author{%
  \IEEEauthorblockN{Thomas J. Maranzatto}
  \IEEEauthorblockA{Department of Mathematics, Statistics, and Computer Science\\
                    University of Illinois Chicago\\
                    Chicago, IL, USA\\
                    Email: tmaran2@uic.edu}
}
\fi
\author{%
  Thomas J. Maranzatto\\
  Department of Mathematics, Statistics, and Computer Science\\
                    University of Illinois Chicago\\
                    Chicago, IL, USA\\
                    Email: tmaran2@uic.edu
}

\maketitle


\begin{abstract}
In this paper we study gossip networks where a source observing a process sends updates to an underlying graph.  Nodes in the graph communicate to their neighbors by randomly sending updates.  Our interest is studying the version age of information (vAoI) metric over various classes of networks.  It is known that the version age of $K_n$ is logarithmic, and the version age of $\overline{K_n}$ is linear.  We study the question `how does the vAoI evolve as we interpolate between $K_n$ and $\overline{K_n}$' by studying Erd\H{o}s-Reyni random graphs, random $d$-regular graphs, and bipartite networks.   Our main results are proving the existence of a threshold in $G(n,p)$ from rational to logarithmic average version age, and showing $G(n,d)$ almost surely has logarithmic version age for constant $d$.  We also characterize the version age of complete bipartite graphs $K_{L,R}$, when we let $L$ vary from $O(1)$ to $O(n)$.
\end{abstract}

\section{Introduction}\label{intro}
We live in a world of connectivity, and randomized communication networks are becoming more ubiquitous.  From IoT applications~\cite{random_iot} to federated machine learning topologies~\cite{random_federated} to connectivity in drone swarms~\cite{random_drone}, it is clear that analysis of information flow in random networks is needed.

In this paper we consider the version age of information (vAoI) metric over various gossip network topologies.  Herein, a source node observing a process randomly shares its status with a network of $\mathcal{N}$ nodes, and the nodes in the network randomly share (or `gossip') their statuses amongst each other. A central question in gossip networks is to characterize how out of date the status of the network is compared to the source, and the vAoI is just one among many metrics used in this analysis.  Others ways to measure `freshness' include the age of information \cite{age_information}, age of incorrect information \cite{age_incorrect}, and age of synchronization \cite{age_sync}.

The vAoI metric was initially used to show that the age of gossip in the clique $K_n$ scales as $\Theta(\log n)$, and the age of gossip in the disconnected graph $\overline{K_n}$ scales as $\Theta(n)$~\cite{yates_gossip}.
 A conjecture was also given that the cycle $C_n$ should have version age $O(\sqrt{n})$ which was subsequently proven\cite{gossip_community}. Other works have studied the 2-d lattice \cite{gossip_grid} and generalizations of cycles \cite{gossip_rings}.

 This paper investigates the question, `how does the vAoI evolve as we interpolate between $K_n$ and $\overline{K_n}$'?  This is a natural question to ask, and was stated in other works~\cite{gossip_grid} and partially answered in the context of generalized cycles~\cite{gossip_rings}.  We choose to focus on complete bipartite graphs $K_{L,R}$ and random graph models.  For bipartite graphs, we investigate how version age increases when the left partition increases from $0$ to $\frac{n}{2}$.  For random graphs, we show a phase transition takes place where the graph goes from having rational version age to logarithmic version age.  We also study random d-regular graphs and show under the uniform distribution, these almost surely always have logarithmic version age.   Our main theorems are presented in section~\ref{sec:summary}.

\section{System Model and Background}\label{sec:model}
\subsection{Version Age of Information}
We consider a source node $n_0$ sending updates to a network $G = (\mathcal{N}, E)$ over $n$ nodes.  We let $\mathcal{N} = [1,..,n]$.  The source updates itself via a Poisson process with rate $\lambda_e$. The source also sends updates to each $v \in G$ as separate Poisson processes with rates $\frac{\lambda}{n}$. In all graphs we consider, an undirected edge $ij \in E$ facilitates two-way communication between nodes $i$ and $j$, with $\lambda_i(j) = \frac{\lambda}{\operatorname{deg}(i)}$ denoting the Poisson rate from $i$ to $j$, and $\lambda_j(i) = \frac{\lambda}{\operatorname{deg}(j)}$ denoting the Poisson rate from $j$ to $i$.  In general $\lambda_i(j) \not= \lambda_j(i)$.  If node $i$ has no neighbors then for all $j$, $\lambda_i(j) = 0$.

The source and every node in the network have internal counters; when a node $i \in \mathcal{N}\cup \{n_0\}$ communicates to a neighbor $j$ (because $i$'s Poisson process updated) $i$ sends its current counter value. The counter for $n_0$ increments if and only if the process for $n_0$ updates.  Contrast this with $j \in \mathcal{N}$ whose counter increments if and only if $j$ receives a newer version from one of its neighbors.   Let $X_j(t)$ be the number of versions node $j$ is behind $n_0$ at time $t$, and for any subset $S$ of vertices let $X_S(t) = \min_{i \in S} X_i(t)$.  Then the limiting average version age of $S$ is $v_G(S) = \lim_{t \rightarrow \infty} X_S(t)$.

The information flow from $i$ into set $S$ is denoted by $\lambda_i(S) = \sum_{j \in S} \lambda_i(j) = \frac{\lambda |N(i) \cap S|}{n}$. Similarly $\lambda_0(S) = \frac{\lambda|S|}{n}$.  Then the main result of Yates~\cite{yates_gossip} is that the stochastic quantity $v_G(S)$ can be computed combinatorially by unrolling the following recursion:
\begin{align}\label{eqn:vage}
    v_G(S) = \frac{\lambda_e + \sum_{i \not\in S}\lambda_i(S) v_G(S\cup \{i\})}{\lambda_0(S) + \sum_{i \not\in S}\lambda_i(S)}
\end{align}

In any network where $\lambda$ and $ \lambda_e$ are constant, $\lambda_0(S) = \frac{\lambda |S|}{n}$, and $\lambda_i(j) = \frac{\lambda}{\operatorname{deg}(i)}$, the version age is monotonic and bounded as $\Omega(\log n) = v_G(S) = O(n)$ as shown in Appendix A.  We use this fact numerous times throughout and will attempt to be explicit when it is applied.

\subsection{Random Graphs}
A graph is $d$-regular if all vertices have degree $d$.  We let $G(n,d)$ denote the uniform probability distribution over all $n$-node $d$ regular graphs.  The Erd\H{o}s-Reyni model $G(n,p)$ denotes the distribution on $n$-node graphs where each edge is chosen i.i.d. with probability $p$.  We say a graph property $Q$ holds asymptotically almost surely (a.a.s.) under $G(n,d)$ if $\mathbb{P}[G(n,d) \in Q] \rightarrow 1$ as $n \rightarrow \infty$, and likewise for $G(n,p)$.

A graph property is called monotone if adding edges to a graph does not change the property and the property is invariant to permuting the labels of vertices (e.g. connectivity is monotone, 3-colorability is not). Recall the definition of a random graph threshold:
\begin{definition}\label{def:threshold}
    A function $p^* = p^*(n)$ is a \textit{threshold} for graph property $R$ in $G(n,p)$ if
    \[\lim_{n \rightarrow \infty} \mathbb{P}[G(n,p) \in R] = \begin{cases}
        0 &\text{ if } p/p^* \rightarrow 0\\
        1 &\text{ if } p/p^* \rightarrow \infty
    \end{cases}\]
\end{definition}
A \textit{sharp threshold} is defined similarly, where the limits to 0 and $\infty$ can be bounded by $1- \epsilon$ and $1+ \epsilon$ respectively, for every $\epsilon > 0$. (See eg.~\cite{friedgut_monotone_threshold} for a more formal definition of sharp thresholds.)

\section{Notation and Summary of Results}\label{sec:summary}
Throughout we assume $\lambda_0$ and $\lambda$ are constants and let $n \rightarrow \infty$.  All inequalities are meant to hold for $n$ sufficiently large.  We say a graph property $Q$ holds asymptotically almost surely (a.a.s.) under $G(n,d)$ if $\mathbb{P}[G(n,d) \in Q] \rightarrow 1$ as $n \rightarrow \infty$, and likewise for $G(n,p)$. We use $v_G(S)$ to denote the version age of $S \subseteq \mathcal{N}$, and when there is no ambiguity about the graph the subscript $G$ will be dropped.  The graph $K_{L,R}$ is the complete bipartite graph with vertex bipartition $\mathcal{N} = L \cup R$ and $E(G) = \{uv: u \in L, v \in R\}$. For any $S \subset \mathcal{N}$, the neighborhood of $S$ is $N(S) = \{v \in \mathcal{N}: \operatorname{dist}_G(v, S) = 1\}$ and its edge boundary is $\partial S = \{ij \in E : i \in S, j \not\in S \}$. Finally we use the notation $\Tilde{\Theta}(\cdot)$ to suppress logarithmic factors.   Our main results are presented below.

\begin{theorem}\label{thm:bipartite}Let $L := L(n)$ and $R := R(n)$ be non-decreasing functions such that $|L|+|R| = n$ and for all $n$, $|L| < |R|$. For $K_{L,R}$, if $j \in R$ then,
\begin{enumerate}
    \item $L = \Theta(1) \hspace{2.56cm} \implies \ver{j} = \Theta(n)$
    \item $L = f(n)$ and $f(n) = o(n) \implies \ver{j} = \Omega(n/f(n))$
    \item $L = \nalph$ for $\alpha \in (0,1)  \hspace{.8cm} \implies \ver{j} = \Tilde{\Theta}(\omnalph)$
    \item $L = \Theta(n) \hspace{2.56cm} \implies \ver{j} = \Theta(\log n)$
    \end{enumerate}
    \end{theorem}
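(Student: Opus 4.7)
The plan is to parameterize by $(a,b) = (|S\cap L|, |S\cap R|)$ and exploit the symmetry of $K_{L,R}$ to collapse the recursion (\ref{eqn:vage}) to two dimensions. Setting $V(a,b) := v_G(S)$ for any such $S$ is well-defined by vertex-transitivity inside each part. Substituting $\lambda_i(S)=\lambda b/|R|$ for $i\in L\setminus S$ and $\lambda_i(S)=\lambda a/|L|$ for $i\in R\setminus S$ in (\ref{eqn:vage}) produces a two-variable recursion whose only nontrivial boundary is $V(|L|,|R|)=\lambda_e/\lambda$, and the quantity of interest is $V(0,1)$.

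All three lower bounds come from the same observation: since $v(S\cup\{i\})\ge 0$, (\ref{eqn:vage}) gives $v(S)\ge \lambda_e/(\lambda_0(S)+\sum_{i\notin S}\lambda_i(S))$. Applied at $S=\{j\}$, whose only neighbors lie in $L$,
\[
V(0,1)\;\ge\;\frac{\lambda_e}{\lambda/n + |L|\lambda/|R|},
\]
which evaluates to $\Omega(n)$, $\Omega(n/f(n))$, and $\Omega(n^{1-\alpha})$ in cases 1, 2, 3 respectively. The general $O(n)$ upper bound from Appendix~A closes case 1, and the general $\Omega(\log n)$ lower bound supplies the case 4 lower bound.

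For case 4, where $|L|,|R|=\Theta(n)$, every gossip rate is $\Theta(1/n)$, so the 2D recursion reduces up to constants to the 1D clique recursion whose solution is $\Theta(\log n)$. I would adapt the harmonic-telescoping argument for $K_n$ by descending induction on $|S|=a+b$ to conclude $V(a,b)=O(\log n)$, which in particular gives $V(0,1)=O(\log n)$.

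The main obstacle is the upper bound in case 3. A one-step expansion at $V(0,1)$ yields the clean identity
\[
V(0,1)-V(1,1) \;=\; \frac{\lambda_e - V(1,1)\lambda/n}{\lambda/n + |L|\lambda/|R|} \;=\; \Theta(n^{1-\alpha}),
\]
which uses only $V(1,1)=o(n)$ from Appendix~A, reducing the task to proving $V(1,1)=\tilde O(n^{1-\alpha})$. I would unroll along the ray $V(1,b)\to V(1,b+1)$, absorbing the sideways $V(2,b)$ term via the monotonicity bound $V(2,b)\le V(1,b)$ to obtain
\[
V(1,b)\;\le\;\frac{\lambda_e}{\lambda_0(S)+(|R|-b)\lambda/|L|} + \frac{(|R|-b)\lambda/|L|}{\lambda_0(S)+(|R|-b)\lambda/|L|}\,V(1,b+1),
\]
then telescope and truncate at an optimal index $b^\star$, patching with the generic bound $V(1,b^\star)\le \lambda_e n/(\lambda(b^\star+1))$. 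The hard part is that this harmonic estimate gives only $\tilde O(|L|+n^{1-\alpha})$, which is tight for $\alpha\le 1/2$ but loose for $\alpha>1/2$; closing the gap across the full range $\alpha\in(0,1)$ will likely require replacing the crude $V(2,b)\le V(1,b)$ bound by a sharper estimate that exploits the factor-$a$ growth of the $R$-side inflow $(|R|-b)\lambda a/|L|$, effectively giving $V(a,b)\le V(1,b)/a$ up to lower-order terms and allowing the feedback from the $V(2,b)$ terms to be controlled.
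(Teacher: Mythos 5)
Your overall route coincides with the paper's for cases 1--3: collapse to the two-parameter recursion by symmetry of $K_{L,R}$ (the paper's Lemma~\ref{lemma:ubipartite}), obtain every lower bound by discarding the nonnegative $v(S\cup\{i\})$ terms in equation~\eqref{eqn:vage}, reduce case 3 to bounding $V(1,1)$, and bound $V(1,1)$ by telescoping along one coordinate after absorbing the sideways term with monotonicity --- your inequality $V(1,b)\le \frac{\lambda_e|L|}{(|R|-b)\lambda}+V(1,b+1)$ is exactly the paper's Lemma~\ref{lemma:recursion} (stated there as $u(1,1)\le\min\{|R|\log|L|,\,|L|\log|R|\}$, derived via the transposed expansion of $u(k,1)$). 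For case 4 you diverge: the paper compares $K_{cn,(1-c)n}$ with $K_{2cn,2(1-c)n}$ through a rescaling fact and an $\epsilon$--$\delta$ argument, whereas you propose to lower-bound the bipartite cut rate $\frac{(|L|-a)b}{|R|}+\frac{(|R|-b)a}{|L|}$ by a constant times the clique rate $\frac{(a+b)(n-a-b)}{n}$ and rerun the harmonic telescoping; this is only sketched, but the needed comparison is routine when $|L|,|R|=\Theta(n)$, so that part is a reasonable (arguably cleaner) alternative.

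The genuine gap is the one you flag yourself: the case-3 upper bound for $\alpha>1/2$. Your telescoping yields $V(1,1)=\tilde O(|L|)=\tilde O(n^\alpha)$, and even the truncation you propose, optimized at $b^\star\approx n^{1-\alpha/2}$ with the generic bound $V(1,b^\star)\le \lambda_e n/(\lambda b^\star)$, only gives $V(1,1)=O(n^{\alpha/2})$, which matches the claimed $\tilde\Theta(n^{1-\alpha})$ only for $\alpha\le 2/3$; for $\alpha$ near $1$ these bounds are polynomially larger than $n^{1-\alpha}$. The sharper estimate you gesture at ($V(a,b)\lesssim V(1,b)/a$, exploiting the $a$-fold inflow into $R$) is not carried out, so your proposal does not prove item 3 on the full stated range $\alpha\in(0,1)$. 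You should also know that the paper's own proof stops at exactly the same point: it applies Lemma~\ref{lemma:recursion}, whose minimum evaluates to $|L|\log|R|\sim n^\alpha\log n$ when $L=n^\alpha$, and asserts that this ``completes the proof'' of the third case, which is valid only for $\alpha\le 1/2$. So the difficulty you identified is not an artifact of your method; it is the actual obstruction in this theorem, and closing the range $\alpha\in(1/2,1)$ requires an argument that neither your proposal nor the paper's written proof supplies.
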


\begin{theorem}\label{thm:dreg}
For any fixed $d \geq 3$, a.a.s the worst-case version age of any vertex in $G(n,d)$ is $\Theta(\log n)$.\end{theorem}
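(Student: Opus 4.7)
The lower bound $v(\{j\}) = \Omega(\log n)$ is immediate from Appendix A. For the upper bound, the plan is to exploit the well-known fact that for fixed $d \geq 3$, a sample from $G(n,d)$ is a.a.s.\ an edge expander: there exists a constant $c = c(d) > 0$ such that $|\partial S| \geq c \cdot \min(|S|,\, n - |S|)$ for every $S \subseteq \mathcal{N}$. We condition on this event. The key identity is that in a $d$-regular graph $\sum_{i \notin S}\lambda_i(S) = \lambda |\partial S|/d$, so expansion directly lower-bounds the total outgoing flow from $S$.

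To control the worst-case version age, define $a(k) := \max_{|S| = k} v(S)$. Starting from the recursion~(\ref{eqn:vage}) and replacing each $v(S \cup \{i\})$ by $a(k+1)$ (justified by the monotonicity noted in Appendix A), we obtain the scalar inequality
\[
a(k) \;\leq\; \alpha_k + \beta_k \cdot a(k+1),
\]
where $\alpha_k = \max_{|S|=k}\lambda_e / D(S)$ and $\beta_k$ is the maximum of the outgoing-flow fraction $\bigl(\sum_{i\notin S}\lambda_i(S)\bigr)/D(S)$ over $|S|=k$, with $D(S) = \lambda_0(S) + \sum_{i \notin S}\lambda_i(S)$. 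Unrolling produces
\[
a(1) \;\leq\; \sum_{k=1}^{n-1}\alpha_k \prod_{j=1}^{k-1}\beta_j \;+\; a(n)\prod_{j=1}^{n-1}\beta_j,
\]
with base case $a(n) = \lambda_e/\lambda = O(1)$.

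It remains to show the main sum is $O(\log n)$. For $k \leq n/2$, expansion yields $D(S) = \Omega(\lambda k/d)$, so $\alpha_k = O(1/k)$; since each $\beta_j \leq 1$, this regime contributes the harmonic sum $O(\log n)$. The main obstacle is the range $k > n/2$, where naive bounds give $\alpha_k = O(1)$ and would sum to $\Theta(n)$, swamping the logarithmic target. Here one must extract the geometric contraction hidden in the $\beta_j$ factors. Writing $k = n - \ell$ and applying expansion to $\bar{S}$, one gets $|\partial S| \geq c\ell$, hence $1 - \beta_k = \Omega(1/(\ell + d))$; telescoping these contractions across $j$ produces $\prod_{j=1}^{k-1}\beta_j = O\bigl((\ell/n)^{d/c}\bigr)$. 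Paired with the sharper bound $\alpha_k = O(1/\ell)$ (valid once $\ell$ exceeds a constant depending on $d$ and $c$), each term contributes $O(\ell^{d/c - 1}/n^{d/c})$, and summing over $\ell \in [1, n/2]$ yields $O(1)$. The boundary case $\ell = O(1)$ is negligible because $\prod_{j=1}^{k-1}\beta_j$ is then $O(n^{-d/c})$. Adding both regimes delivers $a(1) = O(\log n)$, completing the proof.
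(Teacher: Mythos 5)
Your proposal is correct and follows essentially the same route as the paper: condition on the a.a.s.\ edge expansion of $G(n,d)$ (Bollob\'as), bound the outgoing flow by $\lambda|\partial S|/d$ in the recursion with worst-case sets of each size, obtain a harmonic-sum $O(\log n)$ contribution for $|S|\le n/2$ and an $O(1)$ contribution for $|S|>n/2$ from the contraction factors, and invoke the $\Omega(\log n)$ lower bound from Appendix A. One minor bookkeeping quibble: the exponent $d/c$ in your claim $\prod_j \beta_j = O\bigl((\ell/n)^{d/c}\bigr)$ does not follow from $1-\beta_k=\Omega(1/(\ell+d))$ (expansion lower-bounds $|\partial S|$, whereas bounding $\beta_k$ needs the trivial upper bound $|\partial S|\le d\ell$, which gives a constant exponent independent of $c$), but since any positive constant exponent makes the large-set sum $O(1)$, the conclusion is unaffected.
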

\begin{theorem}\label{thm:gnpthresh}
    
Let $\epsilon > 0$. If $p = \frac{(1-\epsilon)\log n }{n}$ then a.a.s. the average version age of a vertex in $G(n,p)$ is $\Omega(n^\epsilon)$. 
 If $p = \frac{(100+\epsilon)\log n}{n}$ then the average version age of a vertex is $\Theta(\log n)$. Furthermore there are constants $\alpha > 0$ and $1 \leq c^* \leq 100$ such that $p = \frac{c^* \log n}{n}$ is a threshold for the graph property ``$G$ has average version age less than $\alpha \log n$''\end{theorem}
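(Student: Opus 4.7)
My plan is to treat the three parts in order. For part~1, the goal is to show that the subcritical regime $p=(1-\epsilon)\log n/n$ produces many isolated vertices, and to exploit the fact that an isolated vertex has linear version age. The expected number of isolated vertices is $n(1-p)^{n-1} = n^{\epsilon}(1+o(1))$ after expanding the logarithm, and a standard second-moment computation yields concentration, so a.a.s.\ there are at least $n^{\epsilon}/2$ isolated vertices. For any isolated $v$, the sum $\sum_{i\neq v}\lambda_i(\{v\})$ in~\eqref{eqn:vage} vanishes, so the recursion collapses to $v_G(\{v\}) = \lambda_e/\lambda_0(\{v\}) = \lambda_e n/\lambda = \Theta(n)$. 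Summing over isolated vertices and dividing by $n$ gives average vAoI $= \Omega(n^{\epsilon})$.

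For part~2 I would separate probability and combinatorics. The probabilistic step is to prove that at $p=(100+\epsilon)\log n/n$ the graph $G(n,p)$ a.a.s.\ enjoys two uniform properties: (a) every vertex has degree in $[np/2,2np]$, and (b) every nonempty proper $S\subsetneq V$ satisfies $|\partial S|\geq cp|S|(n-|S|)$ for a universal constant $c>0$. Both follow from Chernoff estimates plus union bounds; in (b) the tail $\exp(-\Omega(p|S|(n-|S|)))$ must dominate the $\binom{n}{|S|}$ counting factor uniformly in $|S|$, which is precisely why the large constant $100$ provides the needed slack. Given (a) and (b) the denominator in~\eqref{eqn:vage} is estimated as $\sum_{i\notin S}\lambda_i(S) = \sum_{i\notin S}\lambda|N(i)\cap S|/\deg(i) \geq \lambda|\partial S|/(2np) \geq c'\lambda|S|(n-|S|)/n$. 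Unrolling~\eqref{eqn:vage} along a chain $\{j\}=S_1\subset\cdots\subset S_n=V$, using $v_G(V)=\lambda_e/\lambda$ at the top, and applying the partial-fraction identity $\frac{1}{k(n-k)}=\frac{1}{n}\bigl(\frac{1}{k}+\frac{1}{n-k}\bigr)$ gives $v_G(\{j\})=O(\log n)$ for every $j$, hence average vAoI $=O(\log n)$.

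For part~3, fix $\alpha$ larger than the hidden constant of part~2 so that the event ``average vAoI $<\alpha\log n$'' holds a.a.s.\ at $p=(100+\epsilon)\log n/n$, and let $Q_\alpha$ be the corresponding graph property. Assuming $Q_\alpha$ is monotone increasing in edges, the Bollob\'as--Thomason theorem supplies a threshold $p^*(n)$ for $Q_\alpha$ in $G(n,p)$. Part~1 forces $p^*(n)\geq(1-\epsilon)\log n/n$ for every $\epsilon>0$, and part~2 forces $p^*(n)\leq(100+\epsilon)\log n/n$, so $p^*$ has the form $c^*\log n/n$ with $c^*\in[1,100]$.

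The main obstacle I expect is the uniform expansion estimate in part~2 together with its exploitation inside the recursion. The Chernoff-plus-union-bound over all $2^n$ subsets, together with the requirement that degree concentration survive simultaneously, is what forces the multiplicative constant $100$ in the hypothesis, and the telescoping along the chain must be carried out carefully so that the partial-fraction cancellation is the only place where a $\log n$ is incurred. A secondary subtlety in part~3 is justifying monotonicity of $Q_\alpha$: because $\lambda_i(j)=\lambda/\deg(i)$ depends on the degree, adding an edge at $i$ simultaneously creates a new communication channel and \emph{reduces} the rate on every existing edge incident to $i$, so monotonicity cannot be invoked for free and may require a short coupling argument or replacement of $Q_\alpha$ by a monotone surrogate property that implies it.
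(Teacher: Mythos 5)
Your parts 1 and 2 follow the paper's own route almost step for step: the paper also counts isolated vertices at $p=(1-\epsilon)\log n/n$ by a second-moment/Chebyshev argument (its Lemma~\ref{lem:thresholds}.2, proved in Appendix B), uses $v(\{v\})=\lambda_e n/\lambda$ for isolated $v$ to get the $\Omega(n^\epsilon)$ average, and for $p\geq 100\log n/n$ combines exactly your two uniform events — degree concentration (Lemma~\ref{lem:thresholds}.3 with $\epsilon=1/2$) and a Chernoff-plus-union-bound boundary estimate $|\partial S|\in(1\pm\delta)\,|S|(n-|S|)p$ for all $|S|\leq n/2$ (Lemma~\ref{lem:subsetregularity} with $\delta=1/3$; note $\partial S=\partial S^c$ handles your larger sets) — then unrolls \eqref{eqn:vage} along a chain and extracts the $\log n$ from $\sum_i \frac{n}{i(n-i+1/3)}$, which the paper evaluates via digamma identities and you via the equivalent partial-fraction telescoping. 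One small omission: the theorem claims $\Theta(\log n)$, so you should also cite the universal lower bound $v_G(\{i\})=\Omega(\log n)$ (the paper's Appendix A corollary, inherited from the clique analysis), but that is a one-line addition.

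The only genuine gap is the one you yourself flag in part 3: you \emph{assume} that ``average version age $<\alpha\log n$'' is edge-monotone and suggest a coupling or a surrogate property might be needed. The paper closes exactly this hole, and does so purely algebraically (Appendix A): for $G'=G+uv$ it proves $v_{G'}(S)\leq v_G(S)$ for every $S$ by reverse induction on $|S|$, starting from $S=\mathcal{N}$. The point that answers your worry about rate dilution is that in the recursion \eqref{eqn:vage} the new edge only adds a term proportional to $v(S\cup\{v\})$ (or $v(S\cup\{u\})$) to a convex-combination-like expression, and since enlarging a set never increases its version age, replacing that term by $v_G(S)$ shows the whole expression is at most $v_G(S)$; the induction hypothesis handles the substitution $v_{G'}(S\cup\{i\})\leq v_G(S\cup\{i\})$ in the remaining terms, and the degree change $\deg_{G'}(i)$ is absorbed because the same weights appear in numerator and denominator. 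So no coupling or surrogate is required, and once monotonicity is in hand your sandwich argument (lower bound from part 1, upper bound from part 2, Bollob\'as--Thomason for monotone properties) is the same as the paper's conclusion that a threshold $p^*=c^*\log n/n$ with $1\leq c^*\leq 100$ exists. As written, though, your proposal leaves that monotonicity lemma unproved, and it is the one nontrivial ingredient of part 3.
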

\hfill

To summarize in words, the worst-case version age in complete bipartite graphs is inversely related to the size of the smaller component.  This intuitively makes sense, as $K_{0,n} = \overline{K_n}$ is the empty graph and has linear version age, and the balanced bipartite graph $K_{n/2, n/2}$ is very dense so information should flow readily (and we will show has the same version age scaling as $K_n$ up to constant factors).  Theorem~\ref{thm:bipartite} makes this intuition precise.  Our concern with rational partition sizes in $K_{L,R}$ is to compare these graphs to generalized rings with rational degree as studied in Srivastava and Ulukus~\cite{gossip_rings}: in both cases rational degree distribution leads to rational version age scaling.

Perhaps surprising is the contrast between Theorem~\ref{thm:dreg} and Theorem~\ref{thm:gnpthresh}, where $d$-regular graphs have logarithmic version age and $G(n,p)$ only achieves logarithmic version age for expected degree $\sim \log n$.  An explanation for this is that for any fixed $d$, $G(n,d)$ is a very good expander, but $G(n,p)$ only achieves connectivity once $p = \frac{\log n}{n}$ and so below this point is a worst-case expander.  Expansion is a good measure for long-range connectivity in graphs, and connectivity should be a condition for the quick spread of gossip. In fact the proof for Theorem~\ref{thm:dreg} relies heavily on the expansion properties of $G(n,d)$, and likewise the lower bound of Theorem~\ref{thm:gnpthresh} relies on counting isolated vertices in $G(n,p)$.

\section{Bipartite graphs}\label{sec:bipartite}
We prove theorem~\ref{thm:bipartite}.  By symmetry of the complete bipartite graph $K_{L,R}$, the version age of a subset only depends on the number of nodes in the left and right parts.  Therefore for any subset $S \subset V$ with $|S\cap L| = i, |S\cap R| = j$ define $v(i, j)~:=~v_{K_{L,R}}(S)$ and likewise for $\lambda_w(i,j)$.  Finally define $u_{K_{L,R}}(i,j) = \frac{\lambda}{\lambda_e} v_{K_{L,R}}(i,j)$.  
\begin{lemma}\label{lemma:ubipartite}
Let $K_{L,R}$ be a complete bipartite graph on $n$ vertices.  Then for any $S \subset V$ with $S\cap L = i$, $S\cap R = j$, 
\[u_{K_{L,R}(i,j)} = \frac{1 + \frac{(|L|-i)j}{|R|}u(i+1,j) + \frac{(|R|-j)i}{|L|}u(i,j+1)}{\frac{i+j}{n} + \frac{(n-i)j}{|R|} + \frac{(n-j)i}{|L|}}.\]
\end{lemma}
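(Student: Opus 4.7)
The plan is to substitute directly into the master recursion~(\ref{eqn:vage}) and exploit the vertex-symmetry of $K_{L,R}$ to reduce the expression to a function of the two occupancy counts $i$ and $j$. The key observation, which is really the content of the lemma, is that the automorphism group of $K_{L,R}$ acts transitively on $L$ and on $R$ independently, so the value $v_G(S)$ depends only on $|S \cap L|$ and $|S \cap R|$. This makes the shorthand $v(i,j)$ well-defined and collapses the $2^n$ subsets into $(|L|+1)(|R|+1)$ pairs, which is what makes a tractable recursion possible.

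With the symmetry in hand, I would enumerate the Poisson rates appearing in~(\ref{eqn:vage}) for a fixed pair $(i,j)$. Every $w \in L$ has $\deg(w)=|R|$ and neighborhood $N(w)=R$, so each $w \in L\setminus S$ contributes $\lambda_w(S) = \lambda j/|R|$, of which there are $|L|-i$; symmetrically each $w \in R \setminus S$ contributes $\lambda i/|L|$, of which there are $|R|-j$. The source contributes $\lambda_0(S) = \lambda(i+j)/n$. Adding a vertex on the left side sends the state to $(i+1,j)$, and on the right side to $(i,j+1)$.

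Plugging these rates into~(\ref{eqn:vage}) produces a two-variable recursion for $v(i,j)$ in which $\lambda_e$ appears only as an additive term in the numerator while $\lambda$ appears as an overall linear factor on every remaining summand in both numerator and denominator. Applying the change of variables $u(i,j) = (\lambda/\lambda_e)\,v(i,j)$ and cancelling a common factor of $\lambda_e/\lambda$ across the identity yields the claimed closed form. The calculation is bookkeeping and there is no substantive obstacle beyond carefully tracking which side of the bipartition each summand belongs to; in particular, the main step is the symmetry reduction, after which the algebra is forced.
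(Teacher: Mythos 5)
Your proposal is correct and takes essentially the same route as the paper: use the symmetry of $K_{L,R}$ so that $v$ depends only on $(|S\cap L|,|S\cap R|)$, insert the rates $\lambda(i+j)/n$ for the source, $\lambda j/|R|$ for each of the $|L|-i$ left vertices outside $S$, and $\lambda i/|L|$ for each of the $|R|-j$ right vertices outside $S$ into equation~\eqref{eqn:vage}, and normalize via $u=(\lambda/\lambda_e)v$ (the paper does the same bookkeeping, just starting from the rearranged difference form and solving for $u(i,j)$). One small remark: your substitution produces denominator terms $\frac{(|L|-i)j}{|R|}+\frac{(|R|-j)i}{|L|}$, which matches the paper's own derivation and suggests the $(n-i)$, $(n-j)$ appearing in the stated formula are typographical.
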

\begin{proof}
    Rearranging equation~\ref{eqn:vage} and using $v(\mathcal{N}) = \frac{\lambda_e}{\lambda}$,
    \[
        v(\mathcal{N}) \lambda =  \lambda_0(S)v(S) + \sum_{i \not\in S}\lambda_i(S)(v(S \cup \{i\}))
    \]
    Note that $u(S) = \frac{v(S)}{v(\mathcal{N})}$, and $\lambda_0(S) = \frac{\lambda |S|}{n}$. By symmetry of the network we can split the sum and simplify,
    \begin{align*}
        1 &= \frac{i+j}{n}u(i,j) + \sum_{(S \cap L)^c}\frac{j}{|R|}(u(i,j) - u(i+1, j))\\
        &\phantom{{}={}}+ \sum_{(S \cap R)^c} \frac{i}{|L|}(u(i,j) - u(i, j+1))\\
        &= u(i,j) \left(\frac{i+j}{n} + \frac{(|L|-i)j}{|R|} + \frac{(|R|-j)i}{|L|} \right)\\
        &\phantom{{}={}}- \frac{(|L|-i)j}{|R|}u(i+i,j) - \frac{(|R|-j)}{|L|}u(i,j+1)
    \end{align*}
    Solving for $u(i,j)$ completes the proof
\end{proof}
\begin{proof}{(Theorem~\ref{thm:bipartite})}
    We split the proof into four parts corresponding to the four regimes in the Theorem.  Since $u(S)$ is a constant multiple of $v(S)$, we are content to bound the former.  If $L = \Theta(1)$, then there are constants $C_1 < \liminf L(n)$ and $C_2 > \limsup L(n)$.  Then for large enough $n$,
    \[\frac{\lambda}{\lambda_e}v(0,1) = u(0,1) \leq \frac{1 + \frac{C_2}{n - C_2} u(1,1)}{\frac{1}{n} + \frac{C_2}{n - C_2}} \leq \frac{n - C_2}{C_2} = \Theta(n)\]
    Where the second inequality uses the fact that for all graphs $u_G(S) > 0$. The lower bound is almost identical, save for switching the roles of $C_1$ and $C_2$.

    For the second case that $L = \frac{f(n)}{n}$ and $f(n) = o(n)$,
    \[u(0,1) \geq \frac{1 + \frac{f(n)}{n - f(n)}u(1,1)}{\frac{1}{n} + \frac{f(n)}{n - f(n)}} = \Omega\left(\frac{n-f(n)}{f(n)}\right) = \Omega\left(\frac{n}{f(n)}\right)\]
    where the last equality follows from the upper bound on  $f$.

    For the third case when $L = \nalph$, we have the lower bound $\Omega(\omnalph)$ from the above observation.  We also have \[u(0,1) = \frac{1 + \frac{\nalph u(1,1)}{n(1-\nalph)}}{\frac{1}{n} + \frac{\nalph}{n(1-\nalph)}} = O\left(\omnalph\right) + O\left(u(1,1)\right)\]
    so it is enough to show $u(1,1) = \Tilde{O}\left(\omnalph\right)$.  To that end we state and prove the following lemma.
    \begin{lemma}\label{lemma:recursion}
        For any complete bipartite graph $K_{L,R}$,
        \[u_{K_{L,R}}(1,1) \leq \min \{|R| \log(|L|), |L| \log(|R|)\}.\]
    \end{lemma}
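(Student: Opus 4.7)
The plan is to exploit the obvious symmetry in the claimed bound and prove only one side, say $u_{K_{L,R}}(1,1) \leq |L|\log|R|$; interchanging the roles of $L$ and $R$ (equivalently, the roles of the coordinates $i$ and $j$ in Lemma~\ref{lemma:ubipartite}) gives the other inequality $u(1,1) \leq |R|\log|L|$, and the stated minimum follows. The key structural fact I will use is monotonicity of version age in the set, i.e.\ $u(T) \leq u(S)$ whenever $S \subseteq T$, since $\min_{i \in T} X_i \leq \min_{i \in S} X_i$ pointwise; this is one of the bounds alluded to in Appendix~A of the paper.

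First I would specialize the recursion of Lemma~\ref{lemma:ubipartite} to $u(1,j)$ and apply monotonicity in the form $u(2,j) \leq u(1,j)$ to eliminate the $L$-growing term. After moving the $\frac{(|L|-1)j}{|R|}u(1,j)$ contribution to the left-hand side and cancelling, this yields the one-variable upper recursion
\[ u(1,j) \;\leq\; \frac{1 + b_j\, u(1,j+1)}{c_j + b_j}, \qquad b_j := \frac{|R|-j}{|L|}, \quad c_j := \frac{j+1}{n}. \]
This is the crucial reduction: instead of a genuinely two-dimensional recursion on the $(i,j)$ lattice, I can now unroll along a single coordinate axis.

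Unrolling from $j=1$ down to the boundary gives
\[ u(1,1) \;\leq\; \sum_{j=1}^{|R|-1} \frac{1}{c_j+b_j} \prod_{k=1}^{j-1} \frac{b_k}{c_k+b_k} \;+\; u(1,|R|) \prod_{k=1}^{|R|-1} \frac{b_k}{c_k+b_k}. \]
Since every factor $b_k/(c_k+b_k)$ lies in $[0,1]$, the products are harmless, and $1/(c_j+b_j) \leq 1/b_j = |L|/(|R|-j)$. The resulting sum is therefore bounded by $|L|\sum_{m=1}^{|R|-1} 1/m = |L|\,H_{|R|-1} = O(|L|\log|R|)$, exactly the shape we want.

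The main obstacle I anticipate is the boundary at $j = |R|$, where $b_{|R|} = 0$ so that the one-variable recursion degenerates: there is no right-side vertex left to add. I would handle this by peeling off $u(1,|R|)$ separately and bounding it via the same $u(2,|R|) \leq u(1,|R|)$ trick applied to the original recursion, which yields $u(1,|R|) \leq n/(|R|+1)$; combined with the hypothesis $|L| < |R|$ (so $|R| \geq n/2$), the boundary contribution is $O(1)$ and gets absorbed into the $O(|L|\log|R|)$ estimate. With both directions handled symmetrically, the minimum bound in the lemma follows.
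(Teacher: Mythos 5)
Your proposal is correct and is essentially the paper's own argument: both combine the recursion of Lemma~\ref{lemma:ubipartite} with monotonicity of version age under adding a vertex to collapse the two-dimensional recursion to a one-variable one whose unrolling gives a harmonic sum (you unroll along the $R$-coordinate to get $|L|\log|R|$ and appeal to symmetry, while the paper unrolls $u(k,1)$ along the $L$-coordinate to get $|R|\log|L|$ and appeals to the analogous expansion of $u(1,k)$). If anything, your explicit handling of the boundary term $u(1,|R|)\leq n/(|R|+1)$ is more careful than the paper, which silently drops the corresponding boundary contribution.
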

    \begin{proof}
    For clarity we write $L$ and $R$ instead of $|L|$ and $|R|$ and drop the subscript on $u(\cdot,\cdot)$. By Lemma~\ref{lemma:ubipartite}
        \begin{align*}
            u(k,1) &\leq \frac{1 + \frac{L-k}{R}u(k+1, 1) + \frac{k(R-1)}{L}u(k,2)}{\frac{L-k}{R} + \frac{k(R-1)}{L}}\\
            &\leq \frac{1 + \frac{L-k}{R}u(k+1, 1) + \frac{k(R-1)}{L}u(k,1)}{\frac{L-k}{R} + \frac{k(R-1)}{L}}\\
            &=\frac{RL + L(L-k)u(k+1,1) + kR(R-1)u(k,1)}{L(L-k) + kR(R-1)}
        \end{align*}
    Where the second inequality follows since increasing the size of a set can only decrease its version age.  Letting $D = L(L-k) + kR(R-1)$, we have
    \begin{align*}
    &u(k,1)\left(\frac{L(L-k)}{D}\right) \leq \frac{RL}{D} + \frac{L(L-k)}{D}u(k+1,1)\\
    \implies &u(k,1) \leq \frac{R}{L-k} + u(k+1,1)
    \end{align*}
    Starting at $u(1,1)$ and applying this inequality recursively $L$ times yields $u(1,1) \leq \sum_{i=1}^L \frac{R}{L-i} \leq R \log(L)$.  An analogous argument holds for the quantity $L \log(R)$ by instead expanding $u(1,k)$.
    \end{proof}
    Applying Lemma~\ref{lemma:recursion} completes the proof for the third case.
    
    For the fourth case when $L = \Theta(n)$,  similar to case one we could find constants $C_1, C_2$ so that the tail of $L(n)$ is bounded as $nC_1 \leq L(n) \leq nC_2$.  For clarity we are content to let $L = cn$, and $K(n) := K_{cn, (1-c)n}$ for some absolute constant $c$. We also drop floors and ceilings on $cn$ when this isn't an integer; this obviously doesn't change the asymptotics. We briefly prove the following fact.
    \begin{fact}\label{fact:cleq}
        $u_{K(n)}(1,2) \leq u_{K(2n)}(1,2)$ and $u_{K(n)}(2,1) \leq u_{K(2n)}(2,1)$
    \end{fact}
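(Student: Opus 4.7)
\emph{Proof plan.} The plan is to prove the stronger pointwise bound $u_{K(n)}(i,j)\le u_{K(2n)}(i,j)$ for every lattice point $(i,j)$ with $0\le i\le cn$ and $0\le j\le (1-c)n$, by reverse induction on $i+j$; Fact~\ref{fact:cleq} then follows by specializing to $(i,j)=(1,2)$ and $(2,1)$. The base case $(cn,(1-c)n)$ is immediate: Lemma~\ref{lemma:ubipartite} forces $u_{K(n)}(cn,(1-c)n)=1$, and since $(cn,(1-c)n)$ is an interior lattice point of the $K(2n)$ grid, $u_{K(2n)}(cn,(1-c)n)\ge 1$ by the general bound $u\ge 1$ noted after equation~\ref{eqn:vage}.

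For the inductive step, I would apply Lemma~\ref{lemma:ubipartite} to both $u_{K(n)}(i,j)$ and $u_{K(2n)}(i,j)$, use the inductive hypothesis to upper bound $u_{K(n)}(i+1,j)$ and $u_{K(n)}(i,j+1)$ by the corresponding $K(2n)$ values $U_1,U_2$, and cross-multiply the resulting expressions. Writing $A_m,B_m,\widetilde W_m$ for the recursion coefficients in $K(m)$, the target reduces to
\[
\alpha U_1+\beta U_2 \;\ge\; \widetilde W_{2n}-\widetilde W_n,
\]
where $\alpha=A_{2n}\widetilde W_n-A_n\widetilde W_{2n}$, $\beta=B_{2n}\widetilde W_n-B_n\widetilde W_{2n}$, and all three quantities are $O(1/n)$. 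Using the relation $A_m+B_m=\widetilde W_m-(i+j)/m$, a direct expansion yields the perfect-square identity
\[
(\alpha+\beta)-(\widetilde W_{2n}-\widetilde W_n) \;=\; \frac{[(1-c)i-cj]^{2}+c(1-c)(i+j)}{2n\,c(1-c)} \;\ge\; 0,
\]
so whenever both $\alpha$ and $\beta$ are non-negative, the inductive step follows immediately from $U_1,U_2\ge 1$.

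The main obstacle is that $\alpha$ and $\beta$ can have opposite signs in regimes where the ratio $i/j$ is far from $c/(1-c)$; there the substitution $U_1=U_2=1$ is not sign-preserving on its own. To close this gap I would apply one more step of the recursion to express $U_1$ and $U_2$ via common further descendants in $K(2n)$, which shows that $|U_1-U_2|$ is bounded by a constant independent of $n$. Substituted back into $\alpha U_1+\beta U_2$ and collected, the perfect-square identity above still dominates the leading $O(1/n)$ gap, and the lower-order $O(1/n^{2})$ corrections are absorbed for $n$ sufficiently large.
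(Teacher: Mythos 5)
Your route is genuinely different from the paper's. The paper disposes of Fact~\ref{fact:cleq} with a rescaling/coupling observation: doubling the network from $K_{cn,(1-c)n}$ to $K_{2cn,2(1-c)n}$ doubles every degree and hence halves every per-link rate $\lambda/\operatorname{deg}(\cdot)$, so by symmetry a subset $(A,B)$ of the doubled graph has the version age of $(A/2,B/2)$ in the original, and monotonicity of version age in the size of the set then yields the claimed inequalities. You instead aim for the stronger pointwise statement $u_{K(n)}(i,j)\le u_{K(2n)}(i,j)$ by reverse induction on $i+j$, comparing the two instances of the recursion in Lemma~\ref{lemma:ubipartite} coefficient by coefficient. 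That strategy is reasonable in principle, but as written the inductive step is not closed, and the gap sits exactly where you flag it.

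Concretely, in the mixed-sign regime your fix is both unproven and quantitatively insufficient. Take $i=1$ and $j$ large (but $o(n)$), with $c<1/2$: a direct computation from Lemma~\ref{lemma:ubipartite} gives $\alpha\approx\frac{j}{2n}\bigl(\frac{1}{c}-j\bigr)\approx-\frac{j^{2}}{2n}$, while the surplus $(\alpha+\beta)-(\widetilde{W}_{2n}-\widetilde{W}_n)\approx\frac{c\,j^{2}}{2n(1-c)}$. Hence the inductive step needs $\alpha\,(U_1-U_2)$ not to eat the surplus, i.e.\ roughly $U_1-U_2\le\frac{c}{1-c}$, a \emph{specific} constant strictly less than $1$ --- ``$|U_1-U_2|$ bounded by a constant independent of $n$'' does not suffice, and nothing in the sketch (one more expansion of the recursion toward common descendants) establishes a bound of that strength, nor the sign relation $U_1\le U_2$ which would also do. Two further bookkeeping problems: the claim that $\alpha$, $\beta$, $\widetilde{W}_{2n}-\widetilde{W}_n$ are $O(1/n)$ and the stated perfect-square identity are only valid when $i+j=o(n)$, yet the reverse induction must pass through lattice points with $i,j=\Theta(n)$, where these expansions degrade; and the base case should be phrased with care since $(cn,(1-c)n)$ is the full vertex set of $K(n)$ but an interior point of the $K(2n)$ grid, so the two quantities being compared are of different nature ($u=1$ versus a genuinely larger age). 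As it stands the proposal does not prove the Fact; either supply the missing comparison of $U_1$ and $U_2$ with the required constant, or switch to a rate-halving/symmetry argument of the kind the paper uses, which avoids the coefficient analysis entirely.
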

    \begin{proof}
        When we move from $K_{cn, (1-c)n}$ to $K_{2cn, 2(1-c)n}$, every arc in the larger network has half the capacity of a corresponding arc in the smaller network. Therefore by symmetry of the network any subset $(A,B) \subset K_{2cn, 2(1-c)n}$ will have the same version age scaling as $(A/2, B/2) \subset K_{cn, (1-c)n}$.
    \end{proof}
    Now there is a $\delta = \delta(c) > 0$, and for any $\epsilon > 0$ and $n$ large enough,
    \begin{align*} 
        u_{K(n)}(1,1) &= \frac{1 + \frac{nc - 1}{n(1-c)} u_{K(n)}(2,1) + \frac{n(1-c)-1}{nc}u_{K(n)}(1,2)}{\frac{2}{n} + \frac{nc - 1}{n(1-c)} + \frac{n(1-c)-1}{nc}}\\
         &\leq \frac{1}{\frac{1}{n} + \frac{nc - 1/2}{(1-c)n} + \frac{n(1-c) - 1/2}{nc}} \times\\
         &\phantom{{}={}}\bigg(1 + \frac{nc - 1}{n(1-c)}u_{K(2n)}(2,1)\\
         &\phantom{{}={}}+ \frac{n(1-c) - 1}{nc}u_{K(2n)}(1,2) \bigg) + \epsilon\\
         &\leq \frac{1}{\frac{1}{n} + \frac{nc - 1/2}{(1-c)n} + \frac{n(1-c) - 1/2}{nc}} \times\\
         &\phantom{{}={}}\bigg(1 + \frac{nc - 1/2}{n(1-c)}u_{K(2n)}(2,1)\\
         &\phantom{{}={}}+ \frac{n(1-c) - 1/2}{nc}u_{K(2n)}(1,2) \bigg) + \epsilon - \delta\\
         &= u_{K(2n)}(1,1) + \epsilon - \delta
    \end{align*}
    The equality is by Lemma~\ref{lemma:ubipartite}.  In the first inequality we applied Fact~\ref{fact:cleq} and noted the limits are the same. In the second inequality we used the fact that the version age of a subset in any network is at most linear.  Therefore $u_{K(n)}(1,1) \leq u_{K(2n)}(1,1) + \epsilon - \delta$, setting $\epsilon < \delta$ implies $u_{K(n)}(1,1) = O(\log n)$.  The lower bound follows since the version age of the clique $K_n$ is $\Theta(\log n)$.
\end{proof}

\section{Random Regular Graphs}\label{sec:regular}

The proof for Theorem~\ref{thm:dreg} uses similar techniques to \cite{gossip_grid} for bounding the age of gossip on a grid.  Therein a key part of the argument is understanding expansion properties of the network; how many edges exist in any cut of $G$.  We recall the edge expansion number for a graph (also known as the Cheeger constant, or isoperimetric number).
\begin{definition}
    For any graph $G$, the edge expansion number $h(G)$ is given by
    \[h(G) := \min_{|S| \leq n/2} \frac{|\partial S|}{|S|}\]
    where $\partial S$ is the set of edges in the cut spanning $S$ and $S^c$.
\end{definition}
Recall that $G(n,d)$ is the uniform probability distribution over all $d$-regular graphs.  A result by Bollob\'as~\cite{bollobas_cheeger_d-reg} shows for constant $d$, $G(n,d)$ generates good edge expanders.  We need a weaker version of their result,
\begin{theorem}{Bollob\'as\cite{bollobas_cheeger_d-reg}}\label{thm:bollobas}
    For every fixed $d \geq 3$.  Then there is a constant $c_d < \frac{1}{2}$ such that
    \[\mathbb{P}[h(G(n,d)) \geq d c_d] \rightarrow 1 \text{ as } n \rightarrow \infty \]
\end{theorem}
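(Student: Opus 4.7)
My plan is to prove Theorem~\ref{thm:bollobas} via the configuration model. A uniformly random $d$-regular multigraph on $[n]$ can be generated by giving each vertex $d$ half-edges and then taking a uniformly random perfect matching on the resulting $nd$ half-edges; conditioning on simplicity produces $G(n,d)$. For fixed $d$ the probability of simplicity stays bounded below by a positive constant, so it suffices to prove the expansion bound in the configuration model itself and then transfer it.

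Fix $S \subset [n]$ with $|S| = s \leq n/2$ and let $e(S)$ be the number of edges whose endpoints both lie in $S$. Since every vertex has degree $d$, $|\partial S| = ds - 2e(S)$, so the event $|\partial S| < d c_d s$ is exactly $e(S) > d(1-c_d)s/2$. I would bound the probability of this event in the configuration model by a direct counting argument: choose which $2k$ of the $sd$ half-edges incident to $S$ get paired internally, pair them among themselves, then pair the remaining half-edges arbitrarily. Up to standard factors this gives
\[
\Pr[e(S) \geq k] \leq \binom{sd}{2k}(2k-1)!! \cdot \frac{(nd-2k-1)!!}{(nd-1)!!}.
\]

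I would then union-bound over the $\binom{n}{s}$ choices of $S$ and apply Stirling to rewrite the whole expression in the form $\exp\bigl(n\, \Phi_d(s/n, c_d)\bigr)$ for an explicit entropy-like function $\Phi_d$. The task reduces to exhibiting $c_d \in (0,1/2)$ such that $\Phi_d(\sigma, c_d) < 0$ for every $\sigma \in (0, 1/2]$; for $d \geq 3$ the entropy of placing many internal edges is strictly dominated by the matching penalty, so such a $c_d$ exists, and summing over $s$ from $1$ to $n/2$ gives a total that tends to $0$.

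The step I expect to be the main obstacle is the regime $s \approx n/2$, where $\binom{n}{s}$ is exponentially large and the gap between the combinatorial entropy and the pairing penalty is smallest; there one must carefully verify that the constraint $e(S) > d(1-c_d)s/2$ forces a genuine deviation from the typical internal edge density, and use sharper Stirling estimates to close the gap. I would therefore split the union bound into a small range ($s \leq \epsilon n$), where crude bounds suffice, and a linear range ($\epsilon n < s \leq n/2$), where the careful entropy calculation is needed, and finally choose $c_d$ as the worst case across the two ranges.
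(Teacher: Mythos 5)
First, note that the paper does not prove this statement at all: it is quoted as a known result of Bollob\'as~\cite{bollobas_cheeger_d-reg} and used as a black box in the proof of Theorem~\ref{thm:dreg}. So there is no ``paper proof'' to compare against; your proposal has to stand on its own as a proof of the cited theorem. The route you choose --- configuration model, first-moment bound on the number of internal pairings of a set $S$, union bound over $\binom{n}{s}$ sets, Stirling/entropy analysis, and transfer to the simple-graph model via the bounded-below probability of simplicity --- is the standard and correct skeleton for this kind of isoperimetric statement, and the individual ingredients you cite (the identity $|\partial S| = ds - 2e(S)$, the bound $\Pr[e(S)\geq k]\leq\binom{sd}{2k}(2k-1)!!\,(nd-2k-1)!!/(nd-1)!!$, contiguity-free transfer for fixed $d$) are all sound.

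The genuine gap is that the crux of the theorem is asserted rather than proved. The entire content of the statement is the existence of a constant $c_d \in (0,\tfrac12)$ such that the exponent $\Phi_d(\sigma, c_d)$ is strictly negative \emph{uniformly} for $\sigma \in (0,\tfrac12]$; your sketch disposes of this with ``the entropy of placing many internal edges is strictly dominated by the matching penalty, so such a $c_d$ exists.'' That sentence is exactly what must be computed, and it is false without the hypothesis $d \geq 3$: for $d = 2$ the random $2$-regular graph is a union of cycles and has $h(G) = O(1/n)$, so any valid version of your entropy calculation must use $d \geq 3$ quantitatively, yet nowhere in your outline does $d \geq 3$ enter. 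Concretely, after Stirling the exponent at scale $\sigma = s/n$ takes the form $H(\sigma) + \tfrac{d}{2}\bigl[\text{pairing entropy terms depending on } \sigma, c_d\bigr]$, where $H$ is the binary entropy, and one must check that for $d\geq 3$ the bracket is negative enough to beat $H(\sigma)$ for \emph{all} $\sigma \in (0,\tfrac12]$ simultaneously, with the binding constraint indeed near $\sigma = \tfrac12$ (as you anticipate) but also requiring care as $\sigma \to 0$, where $H(\sigma)/\sigma \to \infty$ and one needs the per-vertex penalty $\sim \tfrac{d}{2}\log(1/\sigma)$ to dominate --- again only possible because $d \geq 3 > 2$. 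Until that function is written down and its negativity verified (or a reference to the explicit computation is given), the proposal is a plan for a proof rather than a proof: the step you defer is not a technical refinement but the theorem itself.
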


\begin{proof}{(Theorem~\ref{thm:dreg})}
Recalling identity (4) from \cite{gossip_grid} which is just a rearrangement of equation~\eqref{eqn:vage},
\begin{align}\label{eq:recursion}
\lambda_e = \lambda_0(S)v(S) + \sum_{i \not \in S} \lambda_i(S)(v(S) - V(S \cup \{i\}))
\end{align}
Since $G(n,d)$ is regular, we can partition $\partial S$ into sets $A_1,...,A_d$ where $A_j = \{ v \not\in S : |N(v) \cap S| = j\}$.  Then,
\begin{align*}
    \lambda_e &= \lambda_0(S) v(S) + \sum_{i=1}^d \sum_{j \in A_i}\lambda_j(S)(v(S) - v({S \cup j}))\\
    &\geq \lambda_0(S)v(S) + \sum_{i=1}^{d}|A_i|\frac{\lambda_i}{d} \min _{j \in A_i} (v(S) - v({S \cup j}))\\
    &\geq \lambda_0(S)v(S) + \left(\frac{\lambda}{d}\sum_{i=1}^d i |A_i|\right)(v(S) - \max_{i \in N(S)}(v({S \cup i})))
\end{align*}
Since $\sum_{i=1}^d i |A_i| = \partial S$, by Theorem~\ref{thm:bollobas}, when $|S| < n/2$ a.a.s. $G(n,d)$ satisfies:
\begin{align}\label{eq:subset_small}
    \lambda_e &\geq \frac{\lambda |S|}{n}v(S) + \frac{\lambda}{d}c_d d|S| (v(S) - \max_{i \in N(S)}v({S \cup i}))\nonumber\\
    \implies v(S) &\leq \frac{\frac{\lambda_e}{\lambda} + c_d |S| \max_{i \in N(S)}v_{S \cup i}}{{\frac{|S|}{n} + c_d|S|}}
\end{align}
By an analogous argument for when $|S| > n/2$:
\begin{align}\label{eq:subset_big}
    v(S) &\leq \frac{\frac{\lambda_e}{\lambda} +  c_d (n-|S|) \max_{i \in N(S)}v({S \cup i})}{\frac{|S|}{n} + c_d(n-|S|)}
\end{align}
    Therefore when unrolling recursion~\eqref{eq:recursion} for $v(\{i\})$, if $S < n/2$ we use inequality~\eqref{eq:subset_small}, otherwise we use~\eqref{eq:subset_big}.  To that end let $X$ be the sum corresponding to small subset size and letting $j := |S|$,
\begin{align*}
    X &\leq \frac{\lambda_e}{\lambda}\left(\frac{1}{c_d + \frac{1}{n}}\right) \left(1 + \sum_{i=1}^{n/2} \prod_{j=1}^{i} \frac{c_d j}{\frac{j+1}{n} + c_d(j+1)} \right)\\
    &\leq \frac{\lambda_e}{c_d\lambda}\left(1 + \sum_{i=1}^{n/2} \prod_{j=1}^{i} \frac{c_d j}{\frac{j+1}{n} + c_d(j+1)} \right)\\
    &\leq \frac{\lambda_e}{c_d\lambda}\left(1 + \sum_{i=1}^{n/2} \prod_{j=1}^{i} \frac{j}{j+1} \right)\\
    & = \frac{\lambda_e}{c_d\lambda}\left(1 + \sum_{i=1}^{n/2} \frac{1}{i+1} \right) = O(\log n)
\end{align*}
    Letting $Y$ be the terms corresponding to $|S| > n/2$, it can be shown that
\begin{align*}
    Y &\leq \frac{\lambda_e}{\lambda} + \frac{\lambda_e}{\lambda}\left(1 + \sum_{i = n/2}^{n-2} \prod_{j=n/2}^{i} \frac{c_d (n-j)}{\frac{j}{n} + c_d (n-j-1)} \right)\\
    &\phantom{{}={}}\times \prod_{j=1}^{n/2-1} \frac{c_d j}{\frac{j}{n} + c_d(j+1)} \times \frac{1}{1/2 + c_dn/2}\\
    &\leq C' \frac{\lambda_e}{\lambda} \log n = O(\log n)
\end{align*}
Where we omitted computations that are analogous to those found in the previous step and in~\cite{gossip_grid}.  Finally noting that for any network $v(\{i\}) = \Omega( \log n)$ completes the proof.
\end{proof}

\section{Erd\H{o}s-Reyni Random Graphs}\label{sec:gnp}
We need the following lemmas on properties of $G(n,p)$,
\begin{lemma}\label{lem:thresholds}
    The following holds a.a.s.
    \begin{enumerate}
        \item If $d < 1$ and $p = d/n$, then $G(n,p)$ has no component of size larger than $O(\log n)$
        \item If $\frac{1}{2} < d < 1$ and $p = \frac{d \log n}{n}$ then $G(n,p)$ has $\Omega(n^{1-d})$ isolated vertices
        \item If $d > 100$ and $p = \frac{d \log n}{n}$ then $G(n,p)$ is connected and for any $\epsilon > 0$, every $v \in G(n,p)$ satisfies $\operatorname{deg}(v) \in \left((1-\epsilon)np, (1+\epsilon)np \right)$
    \end{enumerate}
\end{lemma}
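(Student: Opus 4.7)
All three claims are classical facts about $G(n,p)$, and my plan is to handle each using the standard moment or branching-process tool that fits the regime.

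For (1), the subcritical regime, I would expose the component of a fixed vertex $v$ by breadth-first search and dominate the number of freshly discovered neighbors at each step by a Galton--Watson branching process with $\operatorname{Poisson}(d)$ offspring. Since $d < 1$ this dominating process is subcritical, so by a Cram\'er-type large-deviation bound applied to the associated negative-drift random walk, its total progeny $T$ satisfies $\mathbb{P}[T \geq k] \leq e^{-\alpha(d) k}$ for a constant $\alpha(d) > 0$. Setting $k = C \log n$ with $C$ large and union-bounding over the $n$ choices of starting vertex shows no component has size larger than $O(\log n)$ a.a.s.

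For (2), I would apply the second moment method to $X = \sum_{v} \mathbf{1}[\deg(v) = 0]$. Writing $p = d\log n / n$ and expanding $\log(1-p) = -p + O(p^2)$ gives $\mathbb{E}[X] = n(1-p)^{n-1} = (1+o(1)) n^{1-d}$. For the second moment one has $\mathbb{E}[X^2] = \mathbb{E}[X] + n(n-1)(1-p)^{2n-3}$, and using $(1-p)^{-1} = 1+o(1)$ yields $\operatorname{Var}(X) \leq \mathbb{E}[X] + o(\mathbb{E}[X]^2)$. Chebyshev then forces $X \geq \tfrac{1}{2}\mathbb{E}[X] = \Omega(n^{1-d})$ a.a.s., and the hypothesis $d < 1$ is precisely what makes $\mathbb{E}[X] \to \infty$ so that Chebyshev bites.

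For (3), $\deg(v) \sim \operatorname{Bin}(n-1, p)$ has mean $\mu = (1+o(1)) d \log n$, so a multiplicative Chernoff bound gives $\mathbb{P}[|\deg(v) - \mu| > \epsilon \mu] \leq 2 \exp(-\epsilon^2 \mu / 3) = n^{-\Omega(\epsilon^2 d)}$, and a union bound over the $n$ vertices yields the uniform degree concentration as soon as $\epsilon^2 d > 3$, for which $d > 100$ is more than enough. Connectivity at this density is the textbook sharp threshold: the expected count of connected components of size $k$ for $2 \leq k \leq n/2$ is at most $\binom{n}{k} k^{k-2} p^{k-1}(1-p)^{k(n-k)}$ by a Cayley tree-count, and summing over $k$ together with the vanishing expected number of isolated vertices ($\mathbb{E}[X] = n^{1-d} \to 0$ when $d>1$) gives connectivity a.a.s. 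The main obstacle, I expect, is keeping the second-moment estimate in (2) tight throughout the range $1/2 < d < 1$, where $n^{1-d}$ barely diverges and the $O(p^2)$ corrections in $\log(1-p)$ and in the $(1-p)^{2n-3}$ vs.\ $(1-p)^{2(n-1)}$ comparison must be tracked carefully; beyond that bookkeeping, all three parts are routine applications of standard random graph machinery.
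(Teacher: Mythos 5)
Your treatment of item 2 is essentially the paper's own proof: the paper defers exactly this item to Appendix B and argues there by the same second-moment/Chebyshev method. Your version is in fact a bit cleaner and sharper: by bounding $n(n-1)(1-p)^{2n-3}-n^2(1-p)^{2n-2}$ directly you get $\operatorname{Var}(X)\leq \mathbb{E}[X]+o(\mathbb{E}[X]^2)$, which works for every $0<d<1$, whereas the paper uses the cruder bound $\operatorname{Var}(X)\leq n$ and a deviation window of $n^d$, which is where its restriction $d>1/2$ comes from (the paper's display also has a typo, $(1-p^{2n-3})$ for $(1-p)^{2n-3}$, which you avoid). For items 1 and 3 the paper simply cites standard references, while you supply the standard arguments (subcritical exploration, Chernoff plus the tree-count connectivity bound); these are fine, with one small technical nit: $\operatorname{Bin}(n,d/n)$ is not stochastically dominated by $\operatorname{Poisson}(d)$, so either dominate by $\operatorname{Poisson}(d')$ with $d<d'<1$ or run the negative-drift/Chernoff argument directly on the binomial increments.

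One point worth flagging, which your own computation exposes: your degree-concentration step needs $\epsilon^2 d \gtrsim 1$ to beat the union bound, so with $d$ a fixed constant ($d>100$) the conclusion cannot hold ``for any $\epsilon>0$'' --- for $\epsilon$ below roughly $\sqrt{2/d}$ the expected number of vertices of degree below $(1-\epsilon)np$ diverges and such vertices exist a.a.s. This is a defect of the lemma as stated rather than of your argument (the paper only ever invokes item 3 with $\epsilon=1/2$, for which $d>100$ is ample), but you should not claim ``d>100 is more than enough'' for arbitrary $\epsilon$; state the result for $\epsilon\geq c/\sqrt{d}$, or for the fixed $\epsilon=1/2$ actually used.
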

\begin{proof}
    Items one and three are easily verified (\cite{blum_fds}, \cite{gnp_phase_easy}). We leave Item two for Appendix B.
\end{proof}
\begin{lemma}\label{lem:subsetregularity}
Let $0 < \delta < 1$ and $p \geq \frac{30 \log n}{\delta^2 n}$.  Then a.a.s. any subset $S$ of $G(n,p)$ with $|S| < \frac{n}{2}$ satisfies 
\[\mathbb{P}\left( \left| \partial S \right| \not\in (1 \pm \delta)\mathbb{E}[\left|\partial S \right|]\right) = o(1)\]
\end{lemma}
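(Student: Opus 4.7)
The plan is a direct Chernoff-plus-union-bound argument. For any fixed vertex set $S \subseteq [n]$ with $|S| = k \leq n/2$, the boundary $|\partial S|$ is a sum of $k(n-k)$ independent Bernoulli$(p)$ indicators (one for each potential cut edge), so $\mu_k := \mathbb{E}[|\partial S|] = k(n-k)p \geq \tfrac{1}{2}knp$. I would apply the standard multiplicative Chernoff bound
\[
\mathbb{P}\bigl(|\,|\partial S| - \mu_k\,| > \delta \mu_k\bigr) \;\leq\; 2\exp\!\left(-\frac{\delta^2 \mu_k}{3}\right) \;\leq\; 2\exp\!\left(-\frac{\delta^2 knp}{6}\right).
\]
Using the hypothesis $p \geq 30\log n/(\delta^2 n)$ collapses the exponent to $-5k\log n$, giving a per-set failure probability of at most $2n^{-5k}$.

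Next I would take a union bound over all $\binom{n}{k}\leq n^{k}$ subsets of size $k$, which yields a failure probability of at most $2n^{-4k}$ for size class $k$. Summing the resulting geometric tail,
\[
\sum_{k=1}^{\lfloor n/2 \rfloor} 2 n^{-4k} \;\leq\; \frac{2n^{-4}}{1 - n^{-4}} \;=\; O(n^{-4}) \;=\; o(1),
\]
so a.a.s.\ every subset $S$ with $|S|<n/2$ has $|\partial S|$ concentrated within $(1\pm\delta)\mu_{|S|}$, which is what the lemma asserts.

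There is no real obstacle here beyond bookkeeping. The only delicate point is ensuring the Chernoff decay dominates the entropy cost $\binom{n}{k} \leq n^k$ uniformly in $k$, which is exactly why the threshold $p \geq 30 \log n / (\delta^2 n)$ is stated with the specific constant $30$: it forces the exponent in the per-set bound to exceed $\log\binom{n}{k}$ by a factor large enough (here $5$ vs.\ $1$) that the union bound and the sum over $k$ both converge with room to spare. The constant is not tight; any constant strictly larger than $6/\delta^2$ would work after adjusting the $k$-sum.
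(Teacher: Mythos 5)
Your proposal is correct and follows essentially the same route as the paper: a multiplicative Chernoff bound on $|\partial S|\sim\mathrm{Bin}(k(n-k),p)$ for each fixed set, followed by a union bound over all $\binom{n}{k}$ sets of each size $k\le n/2$ and a sum over $k$, with the constant $30$ providing the slack to beat the entropy term. No gap; the only quibble is cosmetic bookkeeping of where the $\delta^2$ sits in your closing remark about the optimal constant.
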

\begin{proof}
    For any $k = |S| \subset \mathcal{N}$ and $\alpha = \delta\sqrt{k(n-k)p}$, define $A_S := \mathbb{P}\big[|\partial S| \not\in (1 \pm \delta)\mathbb{E}[|\partial S|]\big]$. Since in $G(n,p)$ edges are included $i.i.d$, the Chernoff bound states
    \[\mathbb{P}[A_S] \leq 3 \exp\left(\frac{-\alpha^2}{8}\right).\]
    then by the choice of $p$, for $n$ large,
    \begin{align*}
    \frac{9 \log \left(n{n \choose k}\right)}{\delta^2 k(n-k)}\leq \frac{9 \log(\frac{ne}{k})}{\delta^2(n-k)} + \frac{9\log (n)}{\delta^2 (n-1)}\leq \frac{9(1 + \log n)}{\delta^2 \frac{n}{2}} + \frac{10 \log n}{\delta^2 n} \leq p
    \end{align*}
    So that by the definition of $\alpha$,
    \begin{align*}
        \mathbb{P}\left(\bigcup_{\substack{S \subset \mathcal{N}\\ |S| \leq n/2}}A_S\right) &\leq 3\sum_{k=1}^{n/2} {n \choose k}\exp\left(\frac{-\delta^2 k(n-k)p}{8}\right) \leq 3\sum_{k=1}^{n/2} o\left(\frac{1}{n} \right) = o(1)
    \end{align*}
\end{proof}

\begin{proof}{(Theorem~\ref{thm:gnpthresh})}
    Let $\epsilon  > 0$ and suppose $p \leq \frac{(1-\epsilon)\log n}{n}$.  Then by Lemma~\ref{lem:thresholds}.2, there are $O(n^\epsilon)$ isolated vertices.  By equation~\eqref{eqn:vage} any isolated vertex $i$ has version age $v(\{i\}) = \frac{\lambda_e n}{\lambda} = \Theta(n)$.  For any $i$ that is not isolated,
    \[v_{G(n,p)}(\{i\}) = \Omega(\log n).\]  Then the average version age of a vertex is greater than \[\frac{1}{n}(\Theta\left(n^{1 + \epsilon}\right) + \Omega\left((n - n^\epsilon)\right) \log n) = \Omega(n^\epsilon).\]

    Now let $p \geq \frac{100 \log n}{n}$. Define $A_j = \{ v \in V(G) : |N(v) \cap S| = j\}$ and $u(S) = \frac{v(S)}{v(\mathcal{N})}$.  Applying Lemma~\ref{lem:thresholds}.3 with $\epsilon = 1/2$ and Lemma~\ref{lem:subsetregularity} with $\delta = 1/3$, a.a.s. for every $k = |S| \subset \mathcal{N}$ we have 
    \begin{align*}
        v(\mathcal{N}) \lambda &=  \lambda_0(S)v(S) + \sum_{i \not\in S}\lambda_i(S)(v(S \cup \{i\}))\nonumber\\
        &= \frac{\lambda k}{n} v(S) + \sum_{i \not\in S} \frac{|N(i) \cap S|}{\operatorname{deg}(i)}(v(S) - V(S \cup \{i\}))\nonumber
    \end{align*}
    Where the first line is identity~\eqref{eq:recursion}. Then a.a.s.
    \begin{align}
        n &\geq k u(S) + \frac{2n}{3 \log n} \sum_{\ell <  3\log n}A_\ell (u(S) - u(S \cup \{i\}))\label{eq:gnp1}\\
        &\geq k u(S) + \frac{n}{2 \log n}\times  \frac{k (n-k) \frac{2}{3} \log n}{n}\times (u(S) - \max_{i \in N(S)} u(s \cup{i}))\label{eq:gnp2}\\
        &\geq  k u(S) + \frac{1}{3}k(n-k)(u(S) - \max_{i \in N(S)} u(s \cup{i}))\nonumber
    \end{align}
    Where~\eqref{eq:gnp1} uses Lemma~\ref{lem:thresholds} and connectivity, and~\eqref{eq:gnp2} uses Lemma~\ref{lem:subsetregularity} and symmetry. Rearranging, 
    \begin{align}
        u(S) \leq \frac{n + \frac{1}{3}k(n-k) \max u(S \cup \{i\})}{k + \frac{1}{3}k(n-k)}\label{eq:gnp4}
    \end{align}
    Let $P_i = \prod_{j=i}^{n-1}\frac{\frac{1}{3}j(n-j)}{j + \frac{1}{3}j(n-j)}$. Expanding a single vertex $\ell \in \mathcal{N}$,
    \begin{align*}
        u(\{\ell\}) &\leq \frac{n}{1 + \frac{n-1}{3}} + \frac{\frac{1}{3}(n-1)}{1 + \frac{1}{3}(n-1)}+ \sum_{i=1}^n \frac{n}{(i+1) + \frac{1}{3}(i+1)(n-i-1)}P_i
    \end{align*}
    Observing that $P_i = \prod_{j=i}^{n-1} \frac{n-j}{n-j+3} \leq \prod_{j=i}^{n-1} \frac{n-j}{n-j+1} = \frac{1}{n-i+1}$,
    \begin{align}\label{eq:gnp_final}
        u(\{\ell\}) &\leq C_1 + 3n\sum_{i=1}^{n} \frac{1}{i(n-i + \frac{1}{3})}
    \end{align}
    Where the $\frac{1}{3}$ is present to avoid diving by zero and facilitate the next step.  Briefly recall some properties of the Digamma function~\cite{digamma}: for every complex $z$ that is not a negative integer, the Digamma function has series representation $\psi(1 + z) = - \gamma + 
    \sum_{k=1}^{\infty} \frac{z}{k(k+z)}$, where $\gamma$ is the Euler–Mascheroni constant.  Also $\psi(1-x) = \psi(x) + \pi \cot (\pi x)$ and $\psi(1+x) \sim \log x$ if $x \in (0,\infty)$.  Therefore ~\eqref{eq:gnp_final} is bounded as,
    Therefore ~\eqref{eq:gnp_final} is bounded as,
    \begin{align*}
        u(\{\ell\}) &\leq C_1 +  \sum_{i=1}^\infty \frac{3n}{i(n-i + \frac{1}{3})} + \sum_{i=n+1}^\infty\frac{3n}{i(i-n - \frac{1}{3})}\\
        &\leq C_2 + 3\sum_{i=1}^\infty \frac{-(n + \frac{1}{3})}{i(i-(n+\frac{1}{3}))} + 3\sum_{i=1}^\infty \frac{n}{i(n+i)}\\
        &\leq C_3 + \psi(1-(n+\frac{1}{3})) + \psi(n+1)\\
        &= O(\log n)
    \end{align*}

    We conclude by remarking on the existence of a threshold for super-logarithmic to logarithmic version age in $G(n,p)$. Fix $\epsilon > 0$, $\delta > 0$ and the sequences $p_0(n) := \frac{(1-\epsilon) \log n}{n}$ and $p_1(n) := \frac{(100 + \epsilon)\log n}{n}$. Consider the graph property $P= $`$G$ has average version age less than $\alpha \log n$'.  We just proved that $\mathbb{P}[G(n,p_0) \in P] \leq \delta$, and $\mathbb{P}[G(n,p_1) \in P] \geq 1 -\delta$, when $\alpha$ is suitably large.  Since adding edges to any $G$ can only decrease the average version age (Appendix A), $P$ is a monotone graph property invariant under vertex permutations.  Therefore for all $n$, there exists a $p^*(n) \in (p_0(n), p_1(n))$ where $\mathbb{P}[G(n,p^*) \in P] = \frac{1}{2}$.  By the shrinking critical window, $\lim_{n \rightarrow \infty} \frac{p_1(n) - p_0(n)}{p_c(n)}$ exists, so definition~\ref{def:threshold} is satisfied and a threshold exists.
\end{proof}
We briefly note that part 1 of Lemma~\ref{lem:thresholds} implies for $d < 1$ and $p < d/n$, the expected version age of a vertex in $G(n,p)$ is $\Omega\left(\frac{n}{\log n} \right)$.  For $p$ much greater than in part 3, the version age still scales logarithmically, but the constant factors decrease.

\section{Remarks}
While we have roughly characterized the behavior of vAoI in $G(n,p)$, the question of identifying the precise scaling of the threshold is still open. We haven't made much effort to optimize the constants in Theorem~\ref{thm:gnpthresh}.  More work is needed to show the existence of a sharp threshold, and the author believes this will require novel structural results for $G(n,p)$ at or slightly above the sharp connectivity threshold.

There are many open problems left for the study of randomness in gossip networks.  For instance, the 2-d lattice has version age scaling as $n^{1/3}$~\cite{gossip_grid}.  What would happen if we considered bond percolation on the lattice instead?  How would the version age scale with $p$?  There also exist random spatial graph models, such as preferential attachment and nearest neighbor attachment which could model real-world networks in a more realistic way. The combinatorial nature of the vAoI may make studying these models tractable.

\section*{Acknowledgements}
    This research was funding in part by NSF grant  ECCS-2217023.  I would like to thank Sennur Ulukus and Arunabh Srivastava for helpful discussions on section~\ref{sec:regular}, and Xing Gao for help with details in Theorem~\ref{thm:dreg}. Thanks to Lev Reyzin for pointing out a need for Theorem 5.

\newpage
\bibliographystyle{IEEEtran}
\bibliography{bibliography}

\newpage
\onecolumn
\section*{Appendix A}\label{app:monotone}
We show the vAoI function is monotone.  We assume $\lambda, \lambda_e$ are constants, and $\lambda_{i}(j) = \frac{\lambda}{\operatorname{deg}(i)}$.
\begin{lemma}
    For any graph $G$, let $G'$ be obtained from $G$ by adding an edge $(uv)$.  Then for any $S \subset G$, \[v_{G'}(S) \leq v_G(S)\]
\end{lemma}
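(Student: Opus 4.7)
My plan is a reverse induction on $|S|$ combined with a direct rearrangement of the recursion~\eqref{eqn:vage}. The base case $S=\mathcal{N}$ is trivial: both $v_G(\mathcal{N})$ and $v_{G'}(\mathcal{N})$ equal $\lambda_e/\lambda$.

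For the inductive step, suppose $v_{G'}(T)\le v_G(T)$ for every $T\supsetneq S$. Following the same manipulation as in~\eqref{eq:recursion}, rewrite~\eqref{eqn:vage} in the identity form
\[
\lambda_e \;=\; \lambda_0(S)\,v_H(S)\;+\;\sum_{i\notin S}\lambda_i^H(S)\bigl(v_H(S)-v_H(S\cup\{i\})\bigr), \qquad H\in\{G,G'\}.
\]
The key point is that $v_H(S)-v_H(S\cup\{i\})\ge 0$, because enlarging $S$ can only decrease $X_S=\min_{k\in S}X_k$ pointwise in time. Starting from the $H=G$ identity and (i) increasing each rate from $\lambda_i^G(S)$ to $\lambda_i^{G'}(S)\ge\lambda_i^G(S)$ (see the next paragraph), and (ii) decreasing each subset value from $v_G(S\cup\{i\})$ to $v_{G'}(S\cup\{i\})\le v_G(S\cup\{i\})$ (the induction hypothesis), I obtain
\[
\lambda_e \;\le\; \lambda_0(S)\, v_G(S)\;+\;\sum_{i\notin S}\lambda_i^{G'}(S)\bigl(v_G(S)-v_{G'}(S\cup\{i\})\bigr).
\]
Both substitutions only add nonnegative quantities to the right-hand side because each factor $v_G(S)-v_{G'}(S\cup\{i\})$ remains nonnegative and each rate coefficient is nonnegative. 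Rearranging this inequality and comparing against the $H=G'$ identity gives
\[
v_G(S)\Bigl(\lambda_0(S)+\sum_{i\notin S}\lambda_i^{G'}(S)\Bigr) \;\ge\; \lambda_e + \sum_{i\notin S}\lambda_i^{G'}(S)\,v_{G'}(S\cup\{i\}) \;=\; v_{G'}(S)\Bigl(\lambda_0(S)+\sum_{i\notin S}\lambda_i^{G'}(S)\Bigr),
\]
so dividing by the positive denominator yields $v_{G'}(S)\le v_G(S)$, closing the induction.

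The remaining algebraic check is that $\lambda_i^{G'}(S)\ge \lambda_i^G(S)$ for every $i\notin S$. Using the formula $\lambda_i(S)=\lambda|N(i)\cap S|/n$ displayed in Section~\ref{sec:model}, the only neighborhoods affected by adding the edge $(u,v)$ are $N(u)$ and $N(v)$, each of which gains exactly one new element, so $|N(i)\cap S|$ cannot decrease. This single combinatorial observation is the only place one has to examine the perturbation directly; once it is in hand the induction closes mechanically. The main (mild) obstacle I expect is the bookkeeping needed to justify the two substitutions (i) and (ii) as a single monotone step, which is why the identity form of the recursion is preferable to the quotient form — it converts the comparison into a sum of nonnegative terms rather than a ratio.
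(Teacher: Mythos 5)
Your induction skeleton (reverse induction on $|S|$, base case $S=\mathcal{N}$, nonnegativity of $v_H(S)-v_H(S\cup\{i\})$) matches the paper's, but the step the whole argument hinges on is false in the model this lemma is actually about. You justify $\lambda_i^{G'}(S)\ge\lambda_i^G(S)$ via the display $\lambda_i(S)=\lambda|N(i)\cap S|/n$ from Section~\ref{sec:model}; that display is inconsistent with the per-edge rate $\lambda_i(j)=\lambda/\operatorname{deg}(i)$ stated immediately before it (and is what Appendix~A explicitly assumes and uses, writing $|S\cap N(i)|/\operatorname{deg}_{G'}(i)$). With the correct rates, adding the edge $(u,v)$ increases $\operatorname{deg}(u)$ and $\operatorname{deg}(v)$ by one, so the total rate $\lambda$ of $u$ (resp.\ $v$) is split more thinly. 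Concretely, if neither $u$ nor $v$ lies in $S$ but $u$ has a neighbor in $S$, then $\lambda_u^{G'}(S)=\lambda|N_G(u)\cap S|/(\operatorname{deg}_G(u)+1)<\lambda_u^G(S)$: the rate \emph{decreases}, and your monotone substitution (i) breaks. (When $u\in S$, $v\notin S$ the claim does hold, since $|N(v)\cap S|$ and $\operatorname{deg}(v)$ both increase by one and $\tfrac{a+1}{d+1}\ge\tfrac{a}{d}$; so your argument covers that case, but not the case $u,v\notin S$.)

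This is exactly why the paper's proof splits into two cases and, in the case $u,v\notin S$, does not compare rates termwise: it rewrites $v_{G'}(S)$ as a convex combination of $v_G(S)$, $v_G(S\cup\{u\})$, $v_G(S\cup\{v\})$ after invoking the inductive hypothesis, and uses $v_G(S\cup\{u\}),v_G(S\cup\{v\})\le v_G(S)$ so that the shrunken coefficients on the $u$- and $v$-terms are harmless once the weights are regrouped. To repair your identity-form argument you would need to handle the two decreased coefficients $\lambda_u^{G'}(S),\lambda_v^{G'}(S)$ separately (effectively reproducing the paper's convex-combination step), rather than absorbing them into a single monotone substitution; as written, the inequality chain following ``(i)'' is not justified.
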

\begin{proof}
    Recalling equality~\ref{eqn:vage}, note that if $u \in S$ and $v \in S$, then the version age of $S$ in $G$ and $G'$ are identical.  Therefore we only need to consider the cases when one of $u,v$ are in $S$, or neither are in $S$. We prove these cases separately via reverse induction on $S \subset G$. To that end, the base case is $S = \mathcal{N}$ and the statement is true by the observation above.  For the inductive hypothesis, suppose the claim is true for $|S'| = k > 1$.  We now consider sets $|S| = k-1$.

    Case 1 is that $u \in S$ but $v \not\in S$.  Then,

    \begin{align*}
        v_{G'(S)} &= \frac{\frac{\lambda_e}{\lambda} + \sum\limits_{i \not\in S \cup \{v\}} \left(\frac{|S \cap N_G(i)|}{\operatorname{deg}_{G'}(i)}v_{G'}(S \cup \{i\})\right) + \frac{|S \cap N_{G'}(v)|}{\operatorname{deg}_{G'}(v)}v_{G}(S \cup \{v\})}{\frac{|S|}{n} + \sum\limits_{i \not\in S \cup \{v\}}\left( \frac{|S \cap N_G(i)|}{\operatorname{deg}_{G'}(i)}\right) + \frac{|S \cap N_{G'}(v)|}{\operatorname{deg}_{G'}(v)}}\\
        &\leq  \frac{\frac{\lambda_e}{\lambda} + \sum\limits_{i \not\in S \cup \{v\}} \left(\frac{|S \cap N_G(i)|}{\operatorname{deg}_{G'}(i)}v_{G}(S \cup \{i\})\right) + \frac{|S \cap N_{G'}(v)|}{\operatorname{deg}_{G'}(v)}v_{G}(S \cup \{v\})}{\frac{|S|}{n} + \sum\limits_{i \not\in S \cup \{v\}}\left( \frac{|S \cap N_G(i)|}{\operatorname{deg}_{G'}(i)}\right) + \frac{|S \cap N_{G'}(v)|}{\operatorname{deg}_{G'}(v)}}\\
        &\leq \left(\frac{\frac{|S|}{n} + \sum\limits_{i \not\in S \cup \{v\}}\left( \frac{|S \cap N_G(i)|}{\operatorname{deg}_{G'}(i)}\right)}{\frac{|S|}{n} + \sum\limits_{i \not\in S \cup \{v\}}\left( \frac{|S \cap N_G(i)|}{\operatorname{deg}_{G'}(i)}\right) + \frac{|S \cap N_{G'}(v)|}{\operatorname{deg}_{G'}(v)}}\right) v_G(S)\\
        &\phantom{{}={}}+ \left(\frac{\frac{|S \cap N_{G'}(v)|}{\operatorname{deg}_{G'}(v)}}{\frac{|S|}{n} + \sum\limits_{i \not\in S \cup \{v\}}\left( \frac{|S \cap N_G(i)|}{\operatorname{deg}_{G'}(i)}\right) + \frac{|S \cap N_{G'}(v)|}{\operatorname{deg}_{G'}(v)}} \right)v_G(S \cup \{v\})\\
        &\leq \left(\frac{\frac{|S|}{n} + \sum\limits_{i \not\in S \cup \{v\}}\left( \frac{|S \cap N_G(i)|}{\operatorname{deg}_{G'}(i)}\right)}{\frac{|S|}{n} + \sum\limits_{i \not\in S \cup \{v\}}\left( \frac{|S \cap N_G(i)|}{\operatorname{deg}_{G'}(i)}\right) + \frac{|S \cap N_{G'}(v)|}{\operatorname{deg}_{G'}(v)}}\right) v_G(S)\\
        &\phantom{{}={}}+ \left(\frac{\frac{|S \cap N_{G'}(v)|}{\operatorname{deg}_{G'}(v)}}{\frac{|S|}{n} + \sum\limits_{i \not\in S \cup \{v\}}\left( \frac{|S \cap N_G(i)|}{\operatorname{deg}_{G'}(i)}\right) + \frac{|S \cap N_{G'}(v)|}{\operatorname{deg}_{G'}(v)}} \right)v_G(S)\\
        &= v_G(S)
    \end{align*}
    Where the first inequality is from the inductive hypothesis, the second is multiplying by 1 and applying equation~\ref{eqn:vage}, the third is the fact that adding a vertex to a set cannot increase the version age, and the fourth is just rearranging the coefficients in the large brackets.

    Case 2 is when neither $u$ or $v$ are in $S$. Notice that the only terms obtained by unrolling equation~\ref{eqn:vage} that are different in $G$ and $G'$ are those when either $u$ or $v$ are neighbors of the $S$; otherwise the extra edge $uv$ doesn't play a role in the expansion, or the edge is present in $S$ so the terms are identical. Therefore we only need to consider those sets which are distance 1 from $\{u,v\}$.  For completeness we can then perform analogous computations to above.
    \begin{align*}
        &v_{G'}(S) =\\
        &\phantom{{}={}}\frac{\frac{\lambda_e}{\lambda} + \sum\limits_{i\in N_{G'}(S) \setminus \{u\} \setminus \{v\}} \left(\frac{|S \cap N_G(i)|}{\operatorname{deg}_{G'}(i)}v_{G'}(S \cup \{i\})\right) + \frac{|S \cap N_{G'}(v)|}{\operatorname{deg}_{G'}(v)}v_{G'}(S \cup \{v\}) + \frac{|S \cap N_{G'}(u)|}{\operatorname{deg}_{G'}(u)}v_{G'}(S \cup \{u\})}{\frac{|S|}{n} + \sum\limits_{i\in N_{G'}(S) \setminus \{u\} \setminus \{v\}}\left( \frac{|S \cap N_G(i)|}{\operatorname{deg}_{G'}(i)}\right) + \frac{|S \cap N_{G'}(v)|}{\operatorname{deg}_{G'}(v)} + \frac{|S \cap N_{G'}(u)|}{\operatorname{deg}_{G'}(u)}}\\
        &\leq  \frac{\frac{\lambda_e}{\lambda} + \sum\limits_{i\in N_{G'}(S) \setminus \{u\} \setminus \{v\}} \left(\frac{|S \cap N_G(i)|}{\operatorname{deg}_{G'}(i)}v_{G}(S \cup \{i\})\right) + \frac{|S \cap N_{G'}(v)|}{\operatorname{deg}_{G'}(v)}v_{G}(S \cup \{v\}) + \frac{|S \cap N_{G'}(u)|}{\operatorname{deg}_{G'}(u)}v_{G}(S \cup \{u\})}{\frac{|S|}{n} + \sum\limits_{i\in N_{G'}(S) \setminus \{u\} \setminus \{v\}}\left( \frac{|S \cap N_G(i)|}{\operatorname{deg}_{G'}(i)}\right) + \frac{|S \cap N_{G'}(v)|}{\operatorname{deg}_{G'}(v)} + \frac{|S \cap N_{G'}(u)|}{\operatorname{deg}_{G'}(u)}}\\
        &\leq \left(\frac{\frac{|S|}{n} + \sum\limits_{i\in N_{G'}(S) \setminus \{u\} \setminus \{v\}}\left( \frac{|S \cap N_G(i)|}{\operatorname{deg}_{G'}(i)}\right)}{\frac{|S|}{n} + \sum\limits_{i\in N_{G'}(S) \setminus \{u\} \setminus \{v\}}\left( \frac{|S \cap N_G(i)|}{\operatorname{deg}_{G'}(i)}\right) + \frac{|S \cap N_{G'}(v)|}{\operatorname{deg}_{G'}(v)} + \frac{|S \cap N_{G'}(u)|}{\operatorname{deg}_{G'}(u)}}\right) v_G(S)\\
        &\phantom{{}={}}+ \left(\frac{\frac{|S \cap N_{G'}(v)|}{\operatorname{deg}_{G'}(v)}}{\frac{|S|}{n} + \sum\limits_{i\in N_{G'}(S) \setminus \{u\} \setminus \{v\}}\left( \frac{|S \cap N_G(i)|}{\operatorname{deg}_{G'}(i)}\right) + \frac{|S \cap N_{G'}(v)|}{\operatorname{deg}_{G'}(v)} + \frac{|S \cap N_{G'}(u)|}{\operatorname{deg}_{G'}(u)}} \right)v_G(S \cup \{v\})\\
        &\phantom{{}={}}+ \left(\frac{\frac{|S \cap N_{G'}(u)|}{\operatorname{deg}_{G'}(u)}}{\frac{|S|}{n} + \sum\limits_{i\in N_{G'}(S) \setminus \{u\} \setminus \{v\}}\left( \frac{|S \cap N_G(i)|}{\operatorname{deg}_{G'}(i)}\right) + \frac{|S \cap N_{G'}(v)|}{\operatorname{deg}_{G'}(v)} + \frac{|S \cap N_{G'}(u)|}{\operatorname{deg}_{G'}(u)}} \right)v_G(S \cup \{u\})\\
        &\leq v_G(S)
    \end{align*}
\end{proof}
We have the easy corollary, which follows from above and results by~\cite{yates_gossip}.
\begin{corr}
    For any graph $G$ and any $i \in G$,
    \[\Omega(\log n) \leq v_G(\{i\}) \leq O(n)\]
\end{corr}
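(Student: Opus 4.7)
The plan is to derive the corollary directly from the monotonicity lemma just proven together with the previously established version-age behavior of the two extreme graphs $K_n$ and $\overline{K_n}$. The key observation is that the lemma gives monotonicity in adding a single edge, which by induction extends to adding any number of edges. So for any graph $G$ on $n$ vertices and any vertex $i$, we can sandwich $v_G(\{i\})$ between $v_{K_n}(\{i\})$ and $v_{\overline{K_n}}(\{i\})$, then quote the known asymptotics of these two endpoints.

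For the upper bound, I would exhibit a sequence of graphs $G = G_0, G_1, \dots, G_m = \overline{K_n}$ in which each $G_{j+1}$ is obtained from $G_j$ by deleting one edge; applying the lemma in reverse to each step yields $v_G(\{i\}) \leq v_{\overline{K_n}}(\{i\})$. In $\overline{K_n}$ the vertex $i$ is isolated, so the only surviving terms in the recursion~\eqref{eqn:vage} are the driving process at the source and the self-normalization, giving $v_{\overline{K_n}}(\{i\}) = \lambda_e/\lambda_0(\{i\}) = \lambda_e n/\lambda = \Theta(n)$, which is the desired $O(n)$ bound.

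For the lower bound I would do the same argument in the opposite direction: construct a chain $G = H_0, H_1, \dots, H_{m'} = K_n$ by adding one edge at a time, and apply the lemma forward at each step to conclude $v_G(\{i\}) \geq v_{K_n}(\{i\})$. Yates~\cite{yates_gossip} established $v_{K_n}(\{i\}) = \Theta(\log n)$, so $v_G(\{i\}) = \Omega(\log n)$.

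There is no real obstacle here — the lemma does all of the work, and the two endpoint estimates are already in the literature. The only minor point worth flagging in the write-up is that the monotonicity lemma is stated for a single edge addition, so one must explicitly note that iterating it across a sequence of edge additions (respectively deletions) preserves the inequality, which is immediate by transitivity of $\leq$.
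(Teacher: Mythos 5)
Your proposal is correct and matches the paper's intended argument: the paper states the corollary ``follows from above and results by Yates,'' i.e.\ exactly your sandwich between $v_{K_n}(\{i\}) = \Theta(\log n)$ and $v_{\overline{K_n}}(\{i\}) = \lambda_e n/\lambda = \Theta(n)$ via iterated application of the edge-monotonicity lemma. Your explicit note about chaining the single-edge lemma along a sequence of edge additions/deletions is the only detail the paper leaves implicit, and it is handled correctly.
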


\section*{Appendix B}
We prove Lemma~\ref{lem:thresholds}.2

\begin{proof}
    By linearity of expectation, the expected number of isolated vertices $\mu$ is
    \[n(1-p)^{n-1} \approx n\left(1- \frac{d\log n}{n}\right)^n \approx n \exp (-d \log n) =\frac{n}{n^d} = n^{1-d}.\]
    Let $I_i$ be the indicator r.v. that vertex $i$ is isolated.  Then,
    \begin{align*}
        \mathbb{E}\left[\sum_{i \in \mathcal{N}}I_i \sum_{j \in \mathcal{N}} I_j \right]\ &= \mathbb{E}\left[\sum_{i,j \in \mathcal{N}}I_iI_j\right]\\
        &= \mathbb{E}\left[\sum_{i \not=j}I_{ij}\right] + \mathbb{E}\left[\sum_{i=j}I_{ij}\right]\\
        &= \mu + 2{n \choose 2} (1-p^{2n-3})\\
        &\approx \mu + 2{n \choose 2} \exp\left(\frac{-d\log n}{n}(2n-3)\right)\\
        &= \mu + n(n-1)n^{-2d + \frac{3d}{n}}\\
        &= \mu + (n-1)n^{1-2d + \frac{3d}{n}}\\
        &\leq \mu + \frac{n}{2}
    \end{align*}
    Where the last inequality is for $n$ large enough, since $d > 1/2$.  Let $X$ be the r.v. counting the number of isolated vertices.  We have asymptotically $\operatorname{Var}(X) \leq \frac{n}{2} + n^{1-d} - n^{2(1-d)} \leq n$.  Then by Chebyshev's inequality,
    \[
        \mathbb{P}\left[\left|n^{1-d} - X\right| \geq n^d\right] \leq \frac{\operatorname{Var}(X)}{n^{2d}} \leq \frac{n}{n^{2d}} = o(1)
    \]
    So,
    \[\mathbb{P}\left[X \leq n^{1-d} - n^d\right] = o(1)\]
    Therefore a.a.s. for our choice of $p$, there are at least $n^{1-d} - n^d = O\left(n^{1-d}\right)$ isolated vertices in $G(n,p)$.
\end{proof}

\end{document}